\documentclass[sigconf]{acmart}
\usepackage{booktabs}
\usepackage{amsmath}
\usepackage{amssymb}

\acmConference[WSDM '27]{The 20th ACM International Conference on Web Search and Data Mining}{February 2027}{Hong Kong}
\acmYear{2027}\copyrightyear{2027}

\begin{document}

\title{When Concealed Links Cannot Be Recovered: A Structural
Identifiability Bound and Evaluation Pitfalls in Offshore Leak Networks}

\author{Joseph Bingham}
\authornote{Corresponding author.}
\email{jbingham@campus.technion.ac.il}
\affiliation{%
  \institution{Technion -- Israel Institute of Technology}
  \city{Haifa}\country{Israel}}
\renewcommand{\shortauthors}{Bingham}

\begin{abstract}
Leaks such as the Panama and Paradise Papers expose large networks of offshore
entities, and they invite an obvious question for network learning. Can the
relations these structures are built to hide---above all, who beneficially owns
what---be recovered from the public part of the leak by link prediction? We argue
that the answer is mostly no, and that the analyses which suggest otherwise are
measuring the wrong thing. Our main result is a distribution-free identifiability
bound. For any recovery rule that respects graph isomorphism, and that covers
every topological link-prediction score together with every message-passing graph
neural network, a concealed endpoint left isolated in the observed graph is
interchangeable with its structural twins, so its hidden edge cannot be recovered
above chance. Isolation is only the sharpest case. In general the ceiling on
recovery is set by the size of a node's structural-indistinguishability class,
for which the Weisfeiler--Leman colour class is a computable stand-in, and node
degree is at best a loose proxy. On the full ICIJ Offshore Leaks graph (814K
entities and 84K labelled beneficial-owner edges) a classifier-free,
degree-controlled probe reproduces an exact $0.5$ floor for isolated owners,
who make up $20.6\%$ of all owners, and a trained graph neural network lands on
the same floor. Recovery climbs only as structural distinctiveness grows, and the
floor reappears in every one of the five leaks. Along the way we document five
evaluation traps. Each one makes a bound that cannot be beaten look beaten, and we
give a short rule that avoids them. The practical upshot is to redirect effort
from the hidden principal, which is close to unrecoverable, toward the machinery
of concealment, and to spell out why fusing external data helps far less than one
would hope.
\end{abstract}

\begin{CCSXML}
<ccs2012>
<concept><concept_id>10002951.10003317</concept_id><concept_desc>Information systems~Web mining</concept_desc><concept_significance>500</concept_significance></concept>
<concept><concept_id>10010147.10010257</concept_id><concept_desc>Computing methodologies~Machine learning</concept_desc><concept_significance>300</concept_significance></concept>
</ccs2012>
\end{CCSXML}
\ccsdesc[500]{Information systems~Web mining}
\ccsdesc[300]{Computing methodologies~Machine learning}
\keywords{link prediction, graph neural networks, identifiability,
Weisfeiler--Leman, financial networks, offshore leaks, evaluation}

\begin{teaserfigure}
  \includegraphics[width=\textwidth]{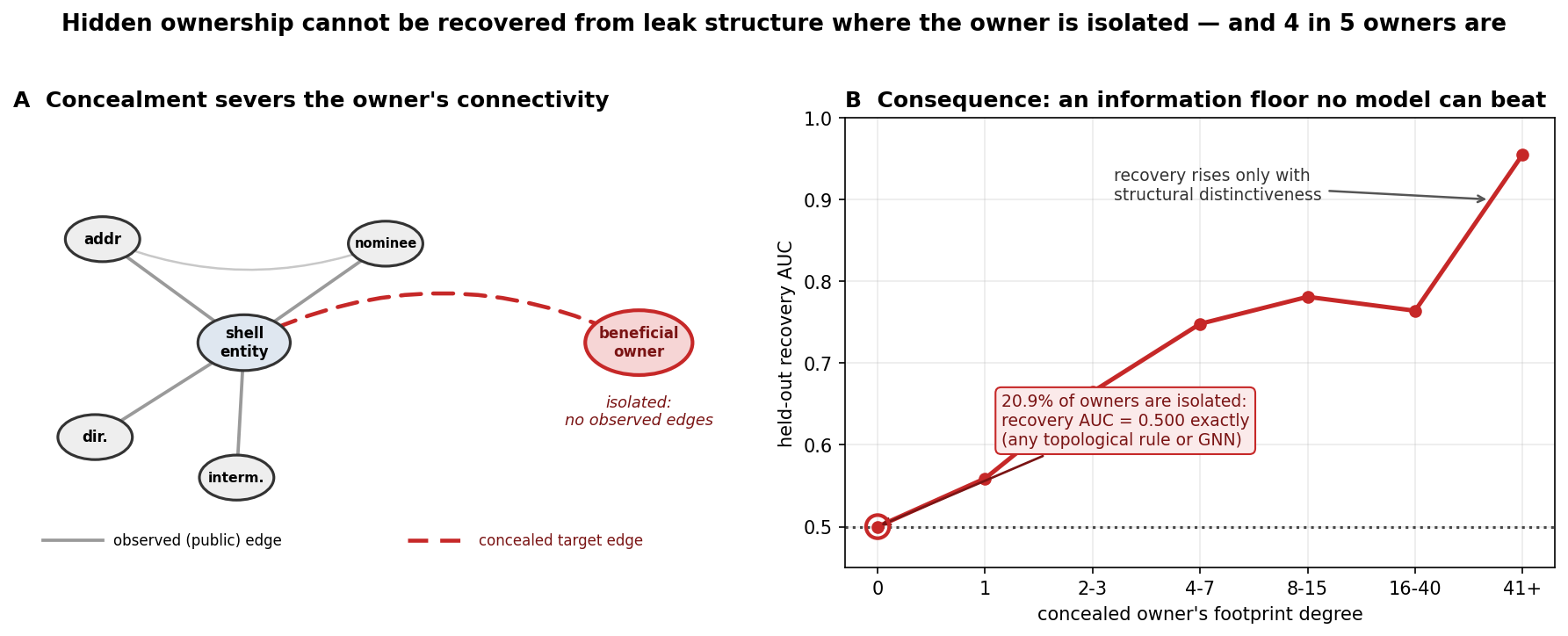}
  \caption{\textbf{The finding.} (A) Offshore structures hide ownership by cutting
  the principal off from the rest of the record. The beneficial owner appears as
  an isolated node whose sole tie to the shell entity is the concealed edge
  (dashed red), while administrative roles stay publicly linked. (B) The
  consequence, measured on all 84K beneficial-owner edges. Recovery sits at an
  exact $0.5$ floor for isolated owners, one in five of all owners, and climbs
  only with structural distinctiveness. The floor holds for every topological
  rule and every GNN.}
  \label{fig:abstract}
\end{teaserfigure}

\maketitle

\section{Introduction}
Offshore finance runs on anonymity. A beneficial owner---the real person who
controls and profits from a company---can sit behind layers of shell entities,
nominee directors, and intermediaries spread across jurisdictions that ask few
questions. When journalists obtain and publish a cache of these records, as with
the Offshore Leaks (2013), Panama Papers (2016), Bahamas Leaks (2016), Paradise
Papers (2017), and Pandora Papers (2021), the result is a rare public window into
that world. The International Consortium of Investigative Journalists (ICIJ) has
merged these caches into one graph of more than eight hundred thousand entities
and their officers, addresses, and intermediaries.

A graph like this is a natural target for network learning. Beneficial ownership
is the relation of interest, and it is exactly the relation the structures were
designed to obscure. If ownership left a recoverable trace in the surrounding
administrative scaffolding, then a link predictor could in principle reconstruct
it, and a static leak would become an engine for financial transparency. That
prospect is the motivation for a good deal of work on these networks, and it is
the prospect we examine.

Our conclusion is largely negative, and we think it is useful for being so.
Consider a shell company with a public record: a registered address, two
directors, an incorporating agent. Its true owner shows up once, as the endpoint
of a single ownership edge, and nowhere else. To recover that edge from
structure, some chain of observed links has to connect the owner back to the
company. When the owner is isolated, and one in five are, no such chain exists.
We prove that in this situation no structural method, however expressive, beats a
coin flip. The problem is not a weak model. The information is simply absent.

The target relation is well defined in the data. The ICIJ schema tags each
officer edge with a role, and the beneficial-ownership roles---ultimate
beneficial owner, beneficiary, owner-of---make up fewer than five percent of all
officer edges. The rest are administrative: shareholders, directors,
secretaries, none of them concealed. We treat the beneficial-owner edges as a
held-out target and the remaining graph as observed structure, and we ask when
the target can be recovered.

The paper makes four contributions. The first is a theorem. Using a
symmetry argument that needs no generative assumptions, we show that any
isomorphism-invariant recovery rule gives identical scores to structurally
identical nodes, so a concealed endpoint with no observed edges is recovered at
exactly chance, whatever the model. The second sharpens this beyond the isolated
case. Recovery is capped by the size of a node's structural-indistinguishability
class, which we make precise through the automorphism orbit and approximate in
practice by the Weisfeiler--Leman colour class. The third is empirical. On the
complete ICIJ graph we confirm the exact floor with bootstrap confidence
intervals, show a trained GNN and a SEAL-style high-order predictor obeying it,
and reproduce it independently across all five leaks. The fourth is
methodological. We catalogue five ways that ordinary evaluation choices
manufacture apparent recovery of a bound that cannot be beaten, and distil the
fix into one rule.

Two consequences follow. Recovering the hidden principal from leak structure is,
for most owners, out of reach, and the honest place to spend effort is the
machinery of concealment---the nominee services, mass-registration addresses, and
intermediary hubs, all of them densely connected by construction. And external
data helps far less than intuition suggests, because the owners one most wants to
identify are precisely those who appear nowhere else. We quantify both points.

\section{Background and Related Work}
\subsection{Preliminaries}
Write $G=(V,E,\tau)$ for a graph with vertices $V$, edges $E$, and a type map
$\tau:V\to\mathcal{T}$. A link-prediction score $s(u,v;G)$ ranks candidate pairs.
Topological scores include common neighbours $|N(u)\cap N(v)|$, Adamic--Adar
$\sum_{w\in N(u)\cap N(v)} 1/\log|N(w)|$, Katz, and rooted PageRank. The
Weisfeiler--Leman (WL) colour refinement updates each node's colour to a hash of
its own colour and the multiset of neighbour colours. Two nodes with different WL
colours can be told apart by some message-passing GNN, and WL colour classes
approximate automorphism orbits, so WL is the standard yardstick for what these
architectures can distinguish.

\subsection{Related work}
\textbf{Learning on offshore-leak networks.} Most prior work on the ICIJ graphs
ranks nodes by suspicion. Suspiciousness-Rank-Back-and-Forth, for instance,
scores nodes by their connectivity to known sanctioned parties~\cite{joaristi}.
These methods re-rank inside the given graph. They do not reconstruct a held-out
concealed relation, and they report no limit on identifiability.

\textbf{Link prediction and hidden-link recovery.} Classical topological
scores~\cite{adamicadar,katz} and GNN link predictors~\cite{kipf,sage} target
edges missing at random. The nearest relative of our setting is hidden-link
prediction in deliberately incomplete networks, such as recovering undisclosed
supply-chain relationships with relational GNNs~\cite{supplychain}. What sets our
problem apart is that concealment here works by cutting the principal's
connectivity, and that we prove an identifiability limit rather than report a
recovery rate.

\textbf{Link-inference and privacy attacks.} A separate line of work recovers
edges that were meant to stay hidden, treating recovery as a privacy attack on a
released model~\cite{stealing,linkteller}. We borrow its insistence on faithful
evaluation. Our result differs in kind, since it bounds what the observed graph
can reveal about the target relation, with no trained model in the loop.

\textbf{Cold-start and impossibility results.} An isolated concealed owner is a
cold-start node, one with no observed links, and cold-start link prediction is
hard for the same reason: structural signal is absent~\cite{coldstart}. We turn
that difficulty into an exact floor for a labelled real relation, and connect it
to the broader study of limits in network inference.

\textbf{Positive-unlabelled learning and entity resolution.} Leak extraction is
partial, so a missing beneficial-owner edge is not a confirmed negative, and the
labels are positive-unlabelled~\cite{pu}. Cross-source entity resolution is the
one channel that can, in principle, escape our bound, and its coverage sets the
ceiling we measure in Section~\ref{sec:external}.

\textbf{Graph expressivity.} Our distinctiveness result is stated through the WL
hierarchy~\cite{wl,gnnwl}. More expressive families, including subgraph
GNNs~\cite{seal,buddy}, refine the classes further but still cannot invent a path
where none exists, a point we return to in Sections~\ref{sec:bound}
and~\ref{sec:disc}.

\textbf{Evaluation pitfalls and scalable structure.} Recent work has surfaced
subtle leakage in link-prediction evaluation. SpotTarget shows that including
target edges at training or inference time distorts results, especially for
low-degree nodes~\cite{spottarget}, which sits naturally beside our trap
taxonomy. On the modelling side, scalable subgraph and structural-encoding methods
such as BUDDY~\cite{buddy}, S3GRL~\cite{s3grl}, and Bloom-signature edge
features~\cite{bloomsig} make high-order pair features cheap enough to run at our
scale, and they are the natural candidates for a full subgraph baseline. The use
of orbits for identification also appears in the causal-lifting view of link
prediction~\cite{causallift}, which shares our reliance on graph symmetry, though
for counterfactual rather than identifiability questions.

\section{Data and Problem Formulation}\label{sec:data}
\subsection{The ICIJ Offshore Leaks graph}
Table~\ref{tab:data} summarises the public ICIJ graph~\cite{icij}: $814{,}344$
entities, $771{,}315$ officers, $25{,}629$ intermediaries, and $402{,}246$
addresses, joined by roughly $3.3$M relationships. The dominant relations are
\texttt{officer\_of} ($1.72$M), \texttt{registered\_address} ($833$K), and
\texttt{intermediary\_of} ($599$K), alongside cross-source resolution edges and
the small \texttt{underlying} (nominee) relation.

\begin{table}[t]
\caption{The ICIJ Offshore Leaks graph. Beneficial-owner (BO) edges are the
target relation. Every other edge is observed.}
\label{tab:data}
\small
\begin{tabular}{@{}lr@{\hskip 18pt}lr@{}}
\toprule
\textbf{Node type} & \textbf{count} & \textbf{Edge / BO subtype} & \textbf{count}\\
\midrule
Entities      & 814{,}344 & \texttt{officer\_of}          & 1{,}720{,}357\\
Officers      & 771{,}315 & \texttt{registered\_address}  & 832{,}721\\
Intermediaries&  25{,}629 & \texttt{intermediary\_of}     & 598{,}546\\
Addresses     & 402{,}246 & resolution edges              & $\sim$166{,}000\\
\midrule
\multicolumn{2}{@{}l}{\emph{BO subtypes ($R$, total 84{,}172):}} & ultimate benef.\ owner & 25{,}883\\
 & & beneficiary of & 23{,}637\\
 & & owner of & 20{,}405\\
 & & beneficial owner(\,of) & 14{,}247\\
\bottomrule
\end{tabular}
\end{table}

\subsection{The target relation}
Each \texttt{officer\_of} edge carries a \texttt{link} sub-type. These separate
administrative roles---shareholder ($34\%$), director ($27\%$), and the
rest---from the beneficial-ownership roles in Table~\ref{tab:data}, which total
$R=84{,}172$ edges, under five percent of officer relations. The
beneficial-owner edges are our concealed target. Everything else forms the
observed graph.

\subsection{Provenance caveats}
Three facts about the data shape any honest evaluation. First, extraction is
partial. A great deal of beneficial ownership lives in unstructured documents and
never reaches the graph, which makes $R$ a lower bound and turns unlabeled pairs
into non-negatives. Second, ICIJ removes isolated nodes during publication, so
the degree distribution is truncated at the low end. Third, the five leaks are
separate collections, and we use them as independent replicates in
Section~\ref{sec:results}.

\subsection{Recovery rules and channels}
Let $R\cap E=\emptyset$, and let $s(o,e;G)$ score a candidate ownership pair. The
rule is structural, meaning isomorphism-invariant, when
$s(\pi(o),\pi(e);\pi\!\cdot\!G)=s(o,e;G)$ for every relabelling $\pi$. A rule can
draw on three channels of information: topology, internal attributes such as the
names already present in the leak, and external attributes joined from outside
sources. Our bound is about the topology channel. Section~\ref{sec:external}
measures the other two.

\section{The Recoverability Bound}\label{sec:bound}
\subsection{Setup}
For $v\in V$ let the footprint $d_{\mathrm{obs}}(v)$ be its degree in the observed
graph $G$, that is, the non-target edges left after $R$ is removed. We call an
officer \emph{isolated} when $d_{\mathrm{obs}}(v)=0$, meaning it has no observed
edge of any kind. Isolation is the sharpest case, and it holds for $20.6\%$ of
owners. It should not be confused with being \emph{disconnected} from a
particular entity, that is, lying in a different connected component of $G$, which
is a weaker condition that holds for $76.2\%$ of owner--entity pairs. The bound
below bites hardest on isolated owners and, through the orbit argument, on
structurally indistinguishable owners more generally. Let $\mathrm{Aut}(G)$ be the
group of type- and edge-preserving automorphisms, and for a target entity $e$ let
$\mathrm{Stab}(e)=\{\pi\in\mathrm{Aut}(G):\pi(e)=e\}$ be the subgroup fixing $e$.
Write $H\!\cdot\!o=\{\pi(o):\pi\in H\}$ for the orbit of $o$ under a subgroup $H$.
Nodes in a common orbit are structurally indistinguishable: nothing computed from
$(G,\tau)$ separates them.

\subsection{Main theorem}
\begin{theorem}[Indistinguishability floor]\label{thm:floor}
Fix a target entity $e$, let $H=\mathrm{Stab}(e)$, and let $C=H\!\cdot\!o$ be the
orbit of the true owner $o$ under $H$. For every structural rule $s$, the score
$s(\cdot,e;G)$ is constant on $C$. Two consequences follow.
\begin{enumerate}
\item[(i)] Ranking $o$ against a uniformly random member of $C$ gives expected
ROC-\textsc{auc} exactly $\tfrac12$.
\item[(ii)] The probability that $o$ is ranked strictly first among all
candidates is at most $1/|C|$.
\end{enumerate}
\end{theorem}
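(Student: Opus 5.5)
The plan is to prove the constancy claim first and then derive (i) and (ii) from it by elementary counting. Take any $o'\in C$. By definition of the orbit there is $\pi\in H=\mathrm{Stab}(e)$ with $\pi(o)=o'$. Because $\pi\in\mathrm{Aut}(G)$ preserves types and edges, $\pi\cdot G=G$ as a typed graph, and $\pi(e)=e$. Structurality, $s(\pi(o),\pi(e);\pi\cdot G)=s(o,e;G)$, then becomes $s(o',e;G)=s(o,e;G)$. So $s(\cdot,e;G)$ is constant on $C$. The only point needing care is that "$\pi\cdot G=G$" must be read in the same sense as in the invariance definition, namely relabelling of vertices together with the type map $\tau$. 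I will state this explicitly, since an automorphism that ignored $\tau$ would not give $\pi\cdot G=G$.

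For (i), I will use the standard convention that ROC-\textsc{auc} counts a comparison as $1$ if $s(o)>s(c)$, $0$ if $s(o)<s(c)$, and $\tfrac12$ on ties. Draw $c$ uniformly from $C$. By constancy every comparison is a tie, so each contributes exactly $\tfrac12$ and the expectation is $\tfrac12$. This covers the case $c=o$ as well. If one prefers to sample from $C\setminus\{o\}$, the same argument applies whenever $|C|\ge2$. I will note that without the tie convention the statement would need the tie-breaking to be exchangeable, so I fix the convention up front.

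For (ii), "ranked strictly first" means $s(o,e)>s(v,e)$ for all candidates $v\neq o$. If $|C|\ge2$, some $o'\in C\setminus\{o\}$ has an equal score, so strict first place is impossible and the probability is $0\le 1/|C|$. If $|C|=1$, the bound $1$ is trivial. If the ranking instead breaks ties by a uniformly random order, which is the reading that makes $1/|C|$ the natural quantity, then at best all of $C$ ties at the top, and symmetry gives each member of $C$ first place with probability at most $1/|C|$. I will handle both readings, because this interpretive issue is the only real obstacle; the group-theoretic core is a single line.

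Finally, I will remark on the isolated case that motivates the theorem. If $d_{\mathrm{obs}}(o)=0$, then any other isolated officer $o'$ of the same type is exchanged with $o$ by the transposition $(o\,o')$, which is an automorphism fixing $e$. So $C$ contains every such twin, and the floor applies.
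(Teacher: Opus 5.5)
Your proposal is correct and follows the paper's proof. It uses the same one-line orbit argument, a $\pi\in\mathrm{Stab}(e)$ combined with isomorphism invariance, to get constancy on $C$, and then reads (i) and (ii) off the resulting ties. You are more careful than the paper in two places: you say explicitly that $\pi\cdot G=G$ relies on type preservation, and you fix the tie convention (ties count $\tfrac12$, or ties are broken uniformly at random) that the paper leaves implicit in its phrase ``a constant score orders the members of $C$ uniformly at random.''
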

\begin{proof}
Take any $o',o''\in C$. By definition of the orbit there is $\pi\in H$ with
$\pi(o')=o''$, and because $\pi$ fixes $e$ we have $\pi(e)=e$. Isomorphism
invariance then gives $s(o',e;G)=s(\pi(o'),\pi(e);G)=s(o'',e;G)$, so $s(\cdot,e;G)$
takes a single value across $C$. A constant score orders the members of $C$
uniformly at random. For (i), the expected \textsc{auc} of one distinguished
member against a random other under a uniform order is $\tfrac12$. For (ii), the
true owner is equally likely to occupy any position among the $|C|$ tied
candidates, so it lands first with probability at most $1/|C|$.
\end{proof}

The argument uses only the symmetry of the realised graph. It assumes no
random-graph model, and it speaks to identifiability rather than to any one
estimator. Part (ii) states the general ceiling the coauthor-facing reader may
expect from an orbit argument: recovery is bounded by how many structural twins
the concealed endpoint has.

\begin{corollary}[Isolation floor]\label{cor:iso}
For a connected target $e$, every degree-$0$ officer of a common type lies in a
single class $C$, because the symmetric group on those isolated officers is a
subgroup of $\mathrm{Stab}(e)$: permuting edgeless vertices disturbs no edge and
fixes $e$. By Theorem~\ref{thm:floor}(i) an isolated true owner is tied with all
isolated decoys, and recovery is exactly $\tfrac12$.
\end{corollary}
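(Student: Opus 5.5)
The plan is to reduce the corollary to Theorem~\ref{thm:floor} by exhibiting, inside $\mathrm{Stab}(e)$, a subgroup that acts transitively on the isolated officers of a given type. Let $I_t=\{v\in V:\tau(v)=t,\ d_{\mathrm{obs}}(v)=0\}$ be the degree-$0$ officers of type $t$. For any permutation $\sigma$ of $I_t$, define $\pi_\sigma:V\to V$ by $\pi_\sigma(v)=\sigma(v)$ for $v\in I_t$ and $\pi_\sigma(v)=v$ otherwise. The first step is to check that $\pi_\sigma\in\mathrm{Aut}(G)$.

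\begin{itemize}
\item \emph{Type preservation.} This holds because $\sigma$ permutes vertices of the single type $t$ and fixes every other vertex.
\item \emph{Edge preservation.} Every observed edge $\{x,y\}\in E$ has both endpoints outside $I_t$, since members of $I_t$ have no observed edges. Hence $\pi_\sigma$ fixes both endpoints and maps the edge to itself, and the same holds for $\pi_\sigma^{-1}=\pi_{\sigma^{-1}}$. Edge types and multiplicities are preserved for the same reason.
\end{itemize}

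The second step is to verify that $\pi_\sigma$ fixes $e$, so that $\pi_\sigma\in\mathrm{Stab}(e)$. This is where the hypothesis on $e$ is used. A connected target, read as one with $d_{\mathrm{obs}}(e)\ge 1$, or simply one whose type differs from $t$ (entities are not officers), is not in $I_t$ and is therefore fixed. So $\sigma\mapsto\pi_\sigma$ embeds $\mathrm{Sym}(I_t)$ into $\mathrm{Stab}(e)$. Since $\mathrm{Sym}(I_t)$ is transitive on $I_t$, if the true owner $o$ lies in $I_t$ then $I_t\subseteq \mathrm{Stab}(e)\cdot o=C$. In fact equality holds, because automorphisms preserve degree and type.

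The third step applies Theorem~\ref{thm:floor}. The score $s(\cdot,e;G)$ is constant on $C\supseteq I_t$, so $o$ is tied with every isolated decoy of its type. Part (i) then gives expected AUC exactly $\tfrac12$ when decoys are drawn uniformly from $C$. I will note explicitly that this is the evaluation protocol meant by ``isolated decoys''. With ties broken uniformly, each tied comparison contributes $\tfrac12$ exactly rather than only in expectation. Part (ii) additionally bounds top-$1$ recovery by $1/|I_t|$.

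The main obstacle is not the group theory but making the hypotheses precise. First, ``edge-preserving'' must be read relative to the observed graph $G$ with $R$ removed. The true edge $(o,e)$ is not in $E$, so it does not obstruct the permutation, but it must not be fed to $s$. Second, the argument needs the rule to see only $(G,\tau)$. Any node attributes, such as names, would break the symmetry, so I will state that restriction alongside the corollary.
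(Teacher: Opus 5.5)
Your proposal is correct and follows the paper's own argument. The symmetric group on the isolated officers of type $t$ embeds in $\mathrm{Stab}(e)$, because no observed edge touches those vertices and $e$ is not among them, so they lie in a single orbit $C$ and Theorem~\ref{thm:floor}(i) gives exactly $\tfrac12$. Your extra points are accurate refinements of what the paper states only in the surrounding text: $C=I_t$ exactly, since automorphisms preserve degree and type; the ``connected'' hypothesis is essentially redundant once entities and officers carry different types; and the rule must see only $(G,\tau)$, with target edges and name attributes excluded.
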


The permutation in Corollary~\ref{cor:iso} treats isolated officers of one type
as interchangeable. Attributes tied to the schema, such as names or
jurisdictions, refine that type and shrink $\mathrm{Aut}(G)$, and a unique name
can pull an isolated node into a class of its own. That route is not a topological
escape but the internal-attribute channel of Section~\ref{sec:external}, worth
about seven percent. The theorem holds for the typed graph taken without those
attributes, and Section~\ref{sec:disc} treats their effect directly.

\begin{corollary}[Zero information]\label{cor:mi}
Let $T$ be the true owner of $e$, a random element of $C$ under any prior. Since
$\mathrm{Stab}(e)$ contains a subgroup acting transitively on $C$ and fixing $G$,
the observed graph is invariant to relabelling within $C$, and
$I(T;G\mid T\!\in\!C)=0$. No decision rule beats the prior. The impossibility is
about information, and it is independent of model capacity.
\end{corollary}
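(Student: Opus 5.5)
The statement to prove is Corollary~\ref{cor:mi}, that $I(T;G\mid T\in C)=0$. The plan is to make precise what ``the observed graph is invariant to relabelling within $C$'' means. The cleanest reading is that $G$, and the typed structure $(G,\tau)$, does not depend on $T$ at all once we condition on $T\in C$. This matches the concealment setup of Section~\ref{sec:data}: $R\cap E=\emptyset$, so the true owner's identity enters only through the held-out edge $(T,e)$ and never through $E$.

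I would fix the realised $G$ and a prior $p$ on $C$. By Theorem~\ref{thm:floor}'s setup, there is a subgroup $K\le\mathrm{Stab}(e)$ acting transitively on $C$; for isolated officers, Corollary~\ref{cor:iso} gives the symmetric group. Every $\pi\in K$ satisfies $\pi\cdot G=G$ and $\pi(e)=e$. So for each $t\in C$ the pair consisting of the observed data and the target entity, viewed in the frame where $t$ is the owner, is mapped by some $\pi$ to the pair for any other $t'\in C$.

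The argument then has two routes. \textbf{Model-free route.} Treat $G$ as a random object whose conditional law given $T=t$ is the law of $\pi_t\cdot G_0$ for a reference graph $G_0$. Because $\pi_t\cdot G_0=G_0$, the conditional law $P(G\mid T=t)$ is the same for all $t\in C$. Hence $G\perp T$ given $T\in C$, and the mutual information vanishes by the definition $I=\mathrm{KL}(P_{T,G}\,\|\,P_TP_G)=0$. \textbf{Decision-rule route.} Any estimator $\hat T=f(G)$ that respects isomorphism satisfies $f(G)=\pi(f(G))$ for all $\pi\in K$. Equivalently, one can apply Theorem~\ref{thm:floor} directly: all of $C$ is tied, so the posterior equals the prior restricted to $C$, and no rule beats $\max_t p(t\mid C)$.

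The main obstacle is the modelling gap. Literally, $G$ is a fixed realised graph, so $I(T;G)$ is trivially zero for any deterministic $G$, and the claim needs a generative reading to be meaningful. I would therefore state explicitly that the assumption is exchangeability: the mechanism producing $G$, given which member of $C$ owns $e$, is equivariant under $K$. I would also point out that under an asymmetric mechanism, for instance one where owners' observed footprints depend on their identity, the conditional laws could differ even though the realised $G$ is symmetric. If this assumption feels too strong, I would fall back to the weaker, purely structural statement that every isomorphism-invariant posterior is constant on $C$, which follows directly from Theorem~\ref{thm:floor}.
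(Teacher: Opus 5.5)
This is the paper's argument: a subgroup of $\mathrm{Stab}(e)$ that acts transitively on $C$ and fixes $G$ makes the law of $G$ the same whichever member of $C$ is the owner, so $I(T;G\mid T\in C)=0$, and your explicit label-equivariance (exchangeability) hypothesis is exactly what the paper's one-line justification uses tacitly, and what the claim that no decision rule beats the prior needs in order to cover rules that are not isomorphism-invariant. The one step to drop is the appeal to $R\cap E=\emptyset$ in your first paragraph, since removing the target edges does not by itself make $G$ independent of $T$ (above-floor recovery for non-isolated owners shows it is not); it is equivariance together with $\pi\cdot G=G$ that does the work.
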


\subsection{Necessary, not sufficient}
A footprint is required but does not suffice. What matters is whether the
footprint \emph{reaches} $e$ through a short path. Footprint degree upper-bounds
reach, and reach upper-bounds recovery. Section~\ref{sec:results} shows a leak
(Pandora) whose owners carry ample footprint yet almost never reach their
entity, and whose recovery stays on the floor as a result.

\subsection{From orbits to a computable proxy}
Theorem~\ref{thm:floor}(ii) ties recovery to $|C|$, the size of the
indistinguishability class. The exact class is an automorphism orbit under
$\mathrm{Stab}(e)$, which is expensive to compute at this scale. The WL colour
class is a coarser, cheap surrogate: orbit-mates always share a WL colour, and
for almost all graphs the two coincide. Higher footprint tends to give a finer WL
colour and thus a smaller class, which is why degree serves as a weak proxy for
distinctiveness. Section~\ref{sec:results} tests the prediction that recovery
tracks WL-class size more tightly than it tracks degree.

\section{Experimental Setup}\label{sec:setup}
\textbf{Data and preprocessing.} We use the full public ICIJ CSV export. The
observed graph is built from every non-$R$ edge, and the $84{,}172$
beneficial-owner edges are held out as targets. No target edge ever appears in
the graph used to compute features.

\textbf{Classifier-free probe.} To test Theorem~\ref{thm:floor} without the
confound of model choice, we use a paired probe that holds the concealed endpoint
fixed. For each $(o,e)\in R$ we measure $o$'s structural reach---Adamic--Adar
over the observed graph, plus a two-hop term weighted by $0.5$, with
neighbourhoods capped at $300$ for tractability---to the true entity $e$ and to a
degree-matched decoy owner of $e$. The paired win-rate is the \textsc{auc}. We
stratify by footprint degree and report bootstrap $95\%$ intervals over $2000$
resamples. Because reach is zero for an isolated owner regardless of the partner,
the floor is the same whichever endpoint we vary (Figure~\ref{fig:modelindep}).

\textbf{GNN.} We train a SIGN/SGC-style network~\cite{sgc,sign}. Multi-hop
features $\mathrm{emb}=[X,\hat A X,\hat A^2 X]$ propagate over the observed graph,
and a gradient-boosted decoder scores pairs. The encoder is inductive and holds
no learnable per-node embeddings, so it cannot memorise targets, and an isolated
node collapses to a generic vector. Features are type one-hot and $\log$-degree,
with no names or jurisdictions, so recovery is purely structural. Targets are
split into disjoint $20$K train and $20$K test sets, all targets are removed from
the message-passing graph, and training negatives are uniform. The faithful
evaluation fixes $e$ and ranks the true owner against a degree-matched decoy drawn
from the beneficial-owner population, which closes both marginal leaks of
Section~\ref{sec:traps}.

\textbf{WL.} We run three rounds of randomised colour refinement over all
$2.0$M nodes of the observed graph, from which every target edge has already been
removed, so the partition cannot leak label information. Each colour maps to a
random value, neighbour sums propagate through
sparse matrix--vector products, and rounding to six decimals keeps identical
neighbour multisets merged. Results match across seeds to three decimals, so the
refinement reproduces the canonical partition and hash collisions do not affect
it.

\textbf{Reproducibility.} Everything runs on commodity CPU. Code to reproduce
each table and figure will be released.

\section{Results}\label{sec:results}
\subsection{The bound holds, hardened}
Table~\ref{tab:hardened} and Figure~\ref{fig:hardened} give the probe over all
$84{,}172$ edges. At footprint degree $0$ the \textsc{auc} is $0.500$ with a
zero-width confidence interval across $17{,}303$ edges, every comparison a tie,
climbing to $0.955$ at degree $41$ and above. Isolated owners, who sit provably at
the floor, are $20.6\%$ of the total. Recovery follows reachability stratum by
stratum, which points to reach as the driver rather than model capacity.

\begin{table}[t]
\caption{Recovery \textsc{auc} against the concealed owner's footprint degree,
all 84K edges, bootstrap $95\%$ CI. Degree $0$ is exactly the floor.}
\label{tab:hardened}
\small
\begin{tabular}{@{}lrcc@{}}
\toprule
\textbf{footprint deg.} & $n$ & \textbf{recovery \textsc{auc}} & \textbf{95\% CI}\\
\midrule
$0$      & 17{,}303 & \textbf{0.500} & $[0.500,0.500]$\\
$1$      & 32{,}614 & 0.559 & $[0.557,0.560]$\\
$2$--$3$ & 29{,}542 & 0.665 & $[0.662,0.667]$\\
$4$--$7$ &  3{,}417 & 0.748 & $[0.740,0.757]$\\
$8$--$15$&    881   & 0.781 & $[0.765,0.797]$\\
$16$--$40$&   239   & 0.764 & $[0.732,0.793]$\\
$\ge 41$ &    176   & 0.955 & $[0.932,0.974]$\\
\bottomrule
\end{tabular}
\end{table}

\begin{figure*}[t]
  \centering
  \includegraphics[width=\textwidth]{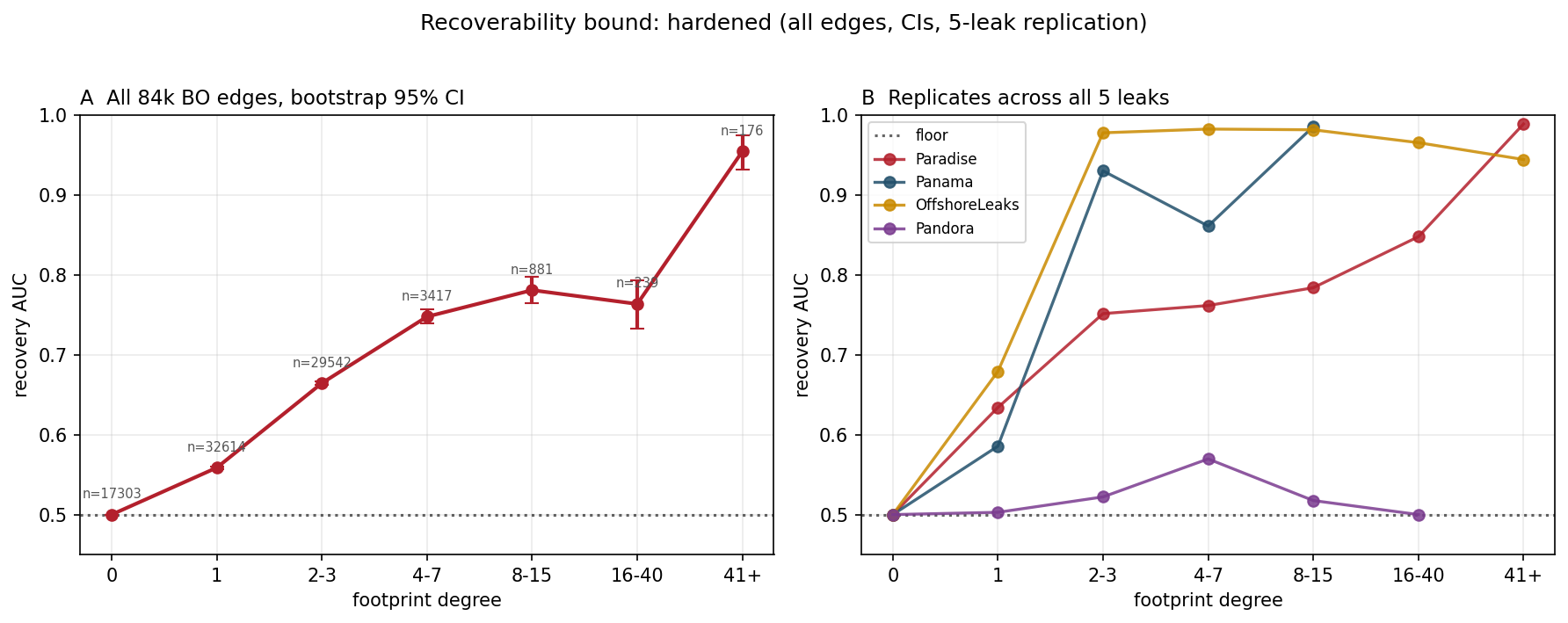}
  \caption{Recoverability against footprint degree. Left: all 84K edges with
  bootstrap $95\%$ CIs, and the degree-$0$ floor at exactly $0.500$. Right: the
  floor reappears in every leak that carries beneficial-owner labels, and Pandora
  stays on it because its owners have footprint but no reach.}
  \label{fig:hardened}
\end{figure*}

\subsection{Per-leak replication and a natural experiment}
The five leaks are separate collections (Table~\ref{tab:perleak}). The floor at
degree $0$ shows up in all four leaks that carry beneficial-owner labels. Bahamas
carries none, being a corporate registry that records directors and shareholders
rather than owners. The climb above the floor appears in three leaks but not in
Pandora.

Define the reachability fraction at radius $L$ as the share of pairs
$(o,e)\in R$ joined by an observed path of length at most $L$.
Table~\ref{tab:reach} reports it for the two largest leaks. Paradise owners reach
their entity more and more often as footprint grows, up to $0.99$ at radius three,
whereas Pandora owners, for all their footprint, stay at or below $0.14$ at every
degree. Pandora is thus a natural experiment separating footprint from reach, and
it is reach that governs recovery. Different providers appear to record ownership
in ways that leave the footprint pointing elsewhere. The typical Pandora pattern
is an owner attached to addresses and intermediaries that serve a \emph{different}
set of entities than the one owned, so the owner's neighbourhood and the target
entity's neighbourhood never meet within a short radius. For a practitioner this
is a useful diagnostic. When an owner's footprint and the target's scaffolding
share no near neighbours, the record is non-recoverable by construction, whatever
the model.

\begin{table}[t]
\caption{Per-leak replication. The degree-$0$ floor holds everywhere. The climb
holds except in Pandora, where footprint does not reach the target.}
\label{tab:perleak}
\small
\begin{tabular}{@{}lrcc@{}}
\toprule
\textbf{Leak} & \textbf{BO edges} & \textbf{deg-$0$ \textsc{auc}} & \textbf{top-bucket \textsc{auc}}\\
\midrule
Paradise       & 25{,}981 & 0.500 & 0.989 (\,$\ge$41)\\
Panama         & 15{,}245 & 0.500 & 0.986 (8--15)\\
Offshore Leaks &  5{,}823 & 0.500 & 0.944 (\,$\ge$41)\\
Pandora        & 37{,}123 & 0.500 & 0.500 (16--40)\\
Bahamas        &        0 & --- & --- (registry; no BO labels)\\
\bottomrule
\end{tabular}
\end{table}

\begin{table}[t]
\caption{Reachability fraction (share of owner--entity pairs joined by an observed
path of length $\le L$) by owner footprint degree. Pandora owners have footprint
but do not reach the entity they own.}
\label{tab:reach}
\small
\begin{tabular}{@{}lcccc@{}}
\toprule
& \multicolumn{2}{c}{\textbf{Paradise}} & \multicolumn{2}{c}{\textbf{Pandora}}\\
\cmidrule(lr){2-3}\cmidrule(lr){4-5}
footprint deg. & $L\!\le\!2$ & $L\!\le\!3$ & $L\!\le\!2$ & $L\!\le\!3$\\
\midrule
$1$       & 0.08 & 0.27 & 0.00 & 0.01\\
$2$--$3$  & 0.07 & 0.50 & 0.02 & 0.05\\
$4$--$7$  & 0.04 & 0.52 & 0.02 & 0.14\\
$8$--$15$ & 0.04 & 0.57 & 0.00 & 0.04\\
$16$--$40$& 0.12 & 0.70 & 0.00 & 0.00\\
$\ge41$   & 0.35 & 0.99 & --- & ---\\
\bottomrule
\end{tabular}
\end{table}

\subsection{A trained GNN obeys the floor}\label{sec:gnn}
Under the faithful evaluation---fix the entity, rank the true owner against a
degree-matched beneficial-owner decoy---the GNN scores exactly $0.500$ at degree
$0$ and rises only to the $0.67$--$0.87$ range where footprint exists
(Figure~\ref{fig:gnn}). The model is competent. It just cannot invent information
that is not present, as Corollary~\ref{cor:mi} requires. The floor holds for any
architecture, because every message-passing GNN aggregates over neighbours, and an
isolated node has none.

Two evaluation choices inflate these numbers, and we control for both. Varying the
entity rather than the owner lets the decoder read the entity's marginal
propensity to carry a hidden owner, which reads a spurious $0.765$ at degree $0$
(Trap~4). The mirror-image mistake is to draw the decoy owner from all officers
instead of from beneficial owners. That lets the decoder reward ``looks like an
owner'' and inflates degree-$1$ recovery from a true $0.670$ to $0.876$ (Trap~5,
Section~\ref{sec:traps}). Both leaks vanish under the same-population,
vary-the-owner pairing, and both leave the degree-$0$ floor at $0.500$, since an
isolated owner has a constant embedding no matter how the decoy is drawn.

\begin{figure}[t]
  \centering
  \includegraphics[width=\linewidth]{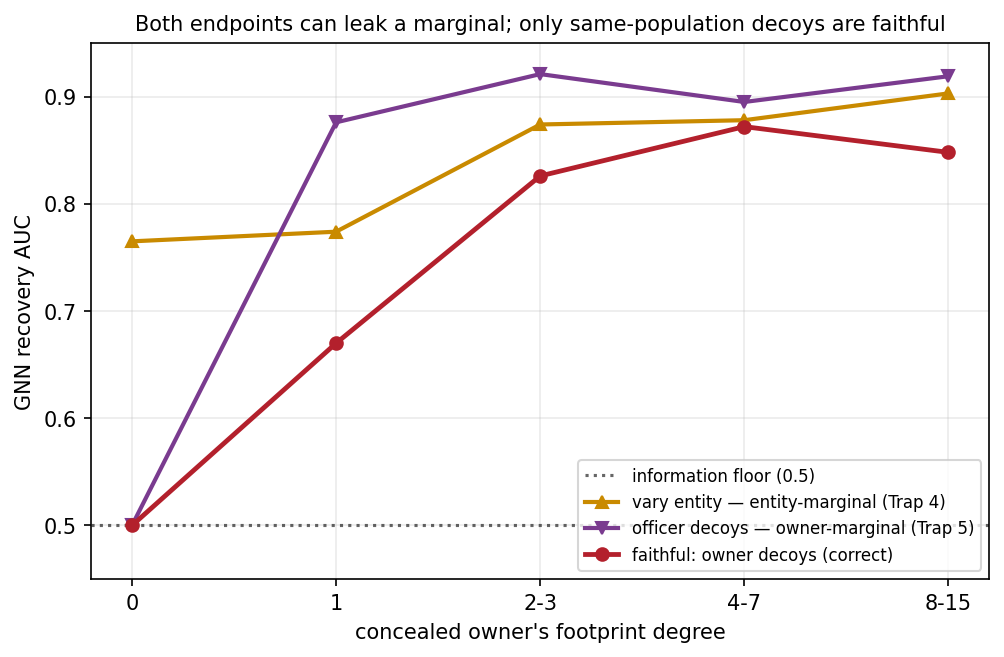}
  \caption{Either endpoint can leak a marginal. Varying the entity (Trap~4) or
  drawing the decoy owner from all officers (Trap~5) inflates GNN recovery. Only
  the same-population, vary-the-owner pairing is faithful, and all three curves
  meet at the $0.500$ floor.}
  \label{fig:gnn}
\end{figure}

\begin{figure}[t]
  \centering
  \includegraphics[width=\linewidth]{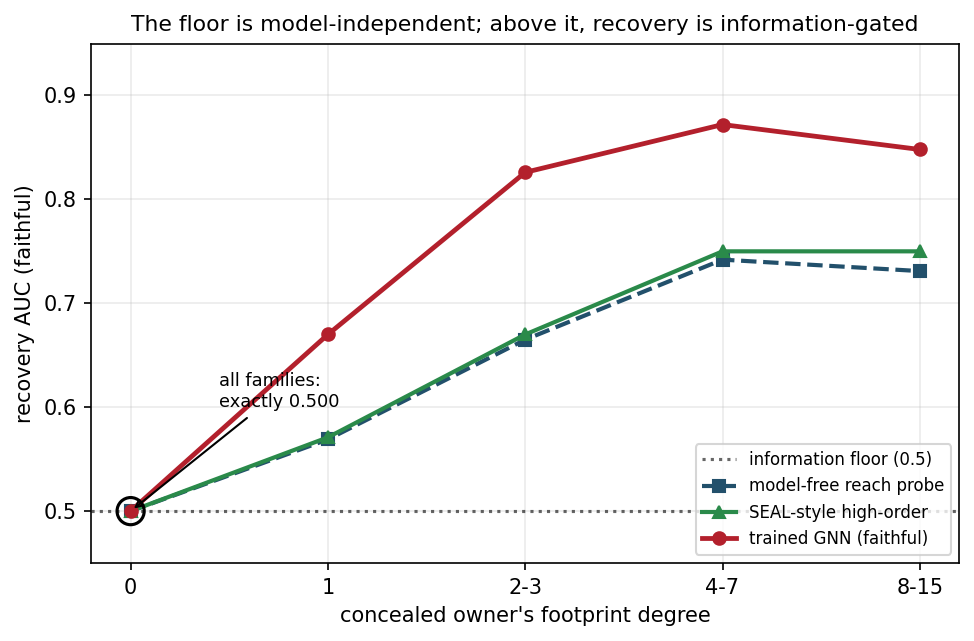}
  \caption{Three predictor families of growing expressivity---a model-free reach
  probe, a high-order SEAL-style predictor (Section~\ref{sec:seal}), and a trained
  GNN---under the faithful evaluation. All meet at $0.500$ at degree $0$. Above it
  the more expressive models recover a little more, and all stay well short of
  perfect, because recovery is bounded by reachable information.}
  \label{fig:modelindep}
\end{figure}

\subsection{Recovery tracks structural distinctiveness}
Binning owners by WL colour-class size, recovery falls as class size grows, from
$0.795$ for uniquely-coloured owners down to $0.519$ for classes of a thousand or
more. The sharper test holds degree fixed and varies class size within a band
(Figure~\ref{fig:wl}). Among degree-$2$--$3$ owners recovery runs from $0.795$
down to $0.60$ as the class grows, and among degree-$4$--$7$ owners from $0.81$
down to $0.50$. Class size is the better predictor of the two: its rank
correlation with recovery is $0.48$, against $0.39$ for degree, and $0.44$ against
$0.31$ once the isolated mass is set aside. This is what Theorem~\ref{thm:floor}(ii)
predicts, with WL classes standing in for orbits.

\textbf{Sensitivity.} The distinctiveness result is stable in depth and in
randomisation. Over one to five WL rounds and three seeds the outcome matches to
three decimals, since the random hashing only breaks ties. Colour counts converge
($7$K, $324$K, $699$K, $847$K, $871$K over the five rounds), and the signal grows
stronger with depth: the rank correlation rises from $0.39$ to $0.56$ across
rounds two through five while degree stays at $0.389$. As refinement approaches
orbits, the largest classes settle at exactly $0.500$, the floor of
Corollary~\ref{cor:iso}. A single round is too coarse to express distinctiveness,
and the effect emerges at two.

\textbf{WL-matched decoys.} Degree matching is a loose stand-in for orbit
matching. As a sharper test we draw the decoy owner from the true owner's WL
colour class, a structural near-twin, and repeat the paired probe. The floor is
untouched: at degree $0$ every isolated owner shares one colour, so the decoy is
another isolated owner and recovery is exactly $0.500$. Above the floor, matching
on colour makes the decoy harder to tell apart, and recovery comes in a little
below the degree-matched figure at every band ($0.542$ against $0.557$ at degree
$1$, $0.639$ against $0.662$ at $2$--$3$, $0.763$ against $0.791$ at $4$--$7$).
The availability of a twin is itself telling. All isolated owners have one,
whereas only a fifth of degree-$4$--$7$ owners do, the rest being structurally
unique. Distinctiveness and recoverability move together, exactly as
Theorem~\ref{thm:floor}(ii) says.

\begin{figure*}[t]
  \centering
  \includegraphics[width=\textwidth]{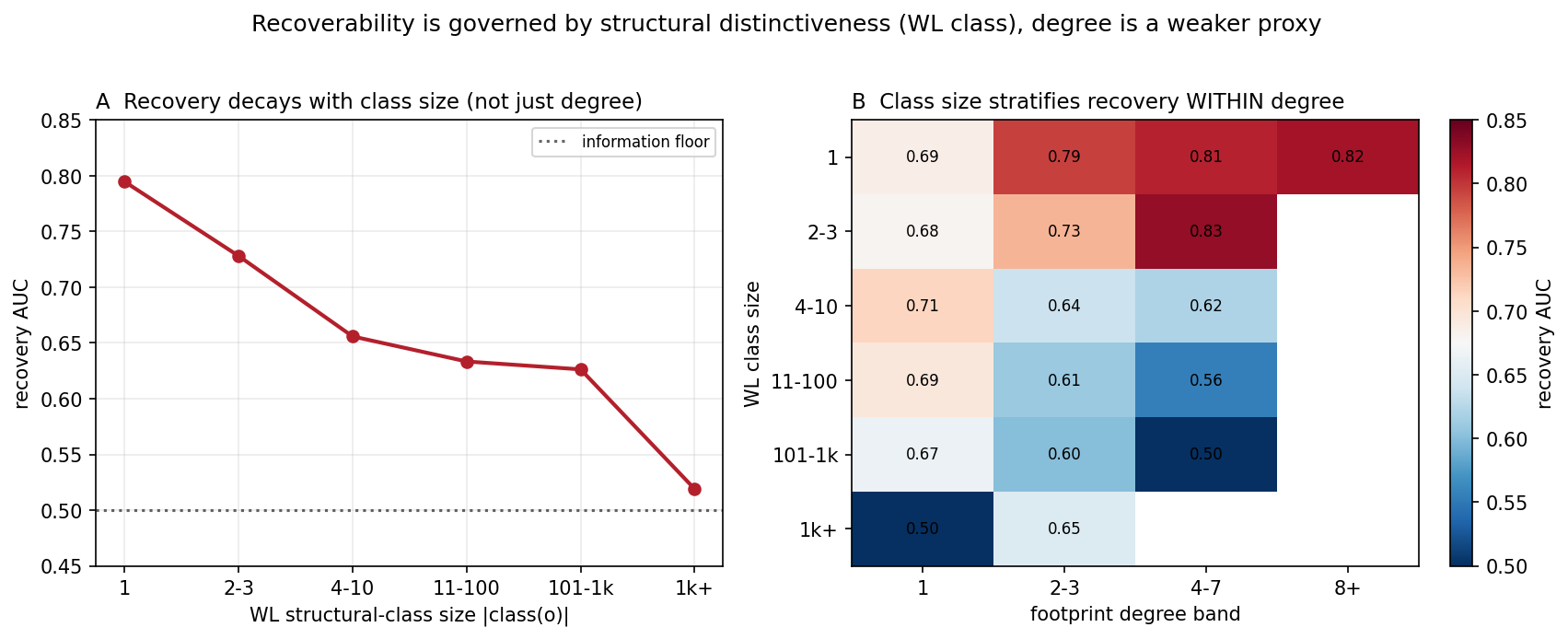}
  \caption{Left: recovery falls as WL colour-class size grows. Right: within each
  degree band, class size still separates recovery (colour is \textsc{auc}), so
  degree stands in for distinctiveness rather than causing it.}
  \label{fig:wl}
\end{figure*}

\subsection{Stronger expressivity does not lift the floor}\label{sec:seal}
Does a more expressive, subgraph-based predictor recover more, and does isolation
still hold it to $0.5$? We build a SEAL-style~\cite{seal} high-order model: a
gradient-boosted decoder over the enclosing-subgraph heuristics SEAL is meant to
capture, namely truncated Katz path counts, resource allocation, Adamic--Adar,
two-hop neighbourhood overlap, and DRNL-style distance indicators, all under the
same faithful pairing. Figure~\ref{fig:modelindep} places it beside the reach
probe and the GNN. At degree $0$ the high-order model is exactly $0.500$: with no
incident edge every subgraph heuristic is null and every distance label reads
``unreachable,'' so isolation is untouched by expressivity. Above the floor the
gain over the plain reach probe is small, at most about $0.02$, which again points
to reachable information as the binding constraint. This model is a heuristic
stand-in for a subgraph GNN. A full DGCNN-SEAL is left to future work, and it too
must sit on the degree-$0$ floor.

\section{Five Evaluation Traps}\label{sec:traps}
The floor cannot be beaten, and yet a handful of ordinary evaluation choices make
it look beaten. Each measures something other than the concealed endpoint's
identity. We met all five while validating the bound.

\textbf{Trap 1: same-address inverted separator.} Negatives that share the
entity's address turn an address feature into a separator pointing the wrong way,
since the hidden owner usually does \emph{not} share that address, and the
\textsc{auc} inflates.

\textbf{Trap 2: connected-decoy pool.} Negatives drawn from nodes connected to the
entity teach the classifier that the disconnected candidate is the owner. That is
circular, because the owner is defined by disconnection. With $76.2\%$ of owners
in a different component from the entity they own, this reports near-perfect
recovery that is really tautology.

\textbf{Trap 3: false-merge entity resolution.} Name-based resolution merges
common names almost every time, and the apparent identity gain is collision noise.
Stratifying by name distinctiveness removes it.

\textbf{Trap 4: entity-marginal leak.} At $d_{\mathrm{obs}}(o)=0$ a GNN decoder
score $d(h_o,h_e)$ reduces to $d(h_\emptyset,h_e)$, a function of the entity
alone. Varying the entity then scores the entity's propensity to carry a hidden
owner, which reads a spurious $0.765$ at degree $0$ (Figure~\ref{fig:gnn}).

\textbf{Trap 5: owner-marginal leak.} The symmetric error. Even when varying the
owner, drawing the decoy from all officers rather than from beneficial owners lets
the decoder reward ``looks like an owner'' over ``owns this entity,'' which lifts
degree-$1$ GNN recovery from a faithful $0.670$ to $0.876$
(Figure~\ref{fig:gnn}). Purely relational scores are immune, because their
features vanish with no $o$--$e$ path, which is why the reach probe needs no such
correction.

\begin{table}[t]
\caption{Spurious lift of each trap on beneficial-owner recovery. Traps 1, 2, 4, 5
report test \textsc{auc}. The honest baseline uses a faithful, same-population,
vary-owner evaluation.}
\label{tab:traps}
\small
\begin{tabular}{@{}clcc@{}}
\toprule
\# & \textbf{trap mechanism} & \textbf{honest} & \textbf{trap}\\
\midrule
1 & same-address inverted separator & 0.577 & \textbf{0.985}\\
2 & connected-decoy pool            & 0.577 & \textbf{0.989}\\
3 & false-merge entity resolution   & \multicolumn{2}{c}{$\sim\!25\%$ of credited resolution spurious$^\dagger$}\\
4 & entity-marginal leak            & 0.500 & \textbf{0.765}\\
5 & owner-marginal leak (deg.\ 1)   & 0.670 & \textbf{0.876}\\
\bottomrule
\end{tabular}
\\[2pt]{\footnotesize $^\dagger$For names shared by more than ten officers,
resolution fires almost every time, pure collision.}
\end{table}

Table~\ref{tab:traps} puts numbers on the inflation. Traps 4 and 5 are two faces
of one error, reading an endpoint's marginal in place of the pair, and both leave
the degree-$0$ floor untouched, since an isolated owner is a constant.

\smallskip\noindent\textbf{A faithful evaluation.} To test an identifiability
claim about endpoint $X$: remove all target edges from the graph used for
features; build the paired negative by varying $X$ with its partner fixed,
degree-matched or, better, orbit-matched, and drawn from the same population as
the true $X$; treat unlabeled pairs as positive-unlabelled rather than negative;
stratify any resolution step by name distinctiveness; and report results by the
endpoint's structural-class size. Varying the partner, or sampling the decoy from
a different population, measures a marginal instead of $X$.

\section{What \emph{Is} Recoverable}\label{sec:external}
\subsection{Attribute channels are capped}
The bound covers topology. Attributes can escape it in principle, and in practice
only barely (Table~\ref{tab:external}). Resolving a name to a connected same-name
node recovers edges the topology bound forbids, but for only about seven percent
of cases, and Trap~3 inflates even that. Joining an external record can hand an
isolated owner a footprint, and here coverage is the wall. Of isolated owners,
$94.6\%$ carry a name that appears nowhere else in the $814$K-node corpus.
Cross-source resolution lifts at most $0.7\%$ of targets in a way one can trust.
We count a same-name match as trustworthy only when the name is shared by at most
three officers across the whole corpus. Names carried by more than ten officers
resolve almost every time and are pure collision, which is Trap~3 at work, so we
exclude them from the trustworthy count. A real sanctions list (OFAC, $20$K names)
matches $0.04\%$ of owner names. The
one external source that carries the ownership edge outright, a public
beneficial-ownership register, covers exactly the jurisdictions that already
mandate disclosure, and not the offshore core (BVI, Cayman, Bermuda, Nevis, Samoa)
that this data is made of.

\begin{table}[t]
\caption{Coverage of the channels that could escape the topology bound. All are
capped at a few percent for beneficial owners.}
\label{tab:external}
\small
\begin{tabular}{@{}lr@{}}
\toprule
\textbf{Channel} & \textbf{coverage / lift}\\
\midrule
Isolated owners with a unique corpus-wide name & $94.6\%$\\
Internal cross-leak resolution (trustworthy)   & $\le 0.7\%$ of $R$\\
Name-resolution recovery channel               & $\sim 7\%$\\
OFAC sanctions name match                       & $0.04\%$\\
Public BO register coverage of offshore core    & structurally absent\\
\bottomrule
\end{tabular}
\end{table}

\subsection{The tractable target: the machinery of concealment}
One fact runs through every version of the problem. Any relation that ends at the
concealed party inherits the isolation limit, because ending there is what
concealment means. The nodes that stay visible are the ones doing the concealing.
This enabler layer is heavily concentrated: the top one percent of intermediaries
account for $62.7\%$ of all incorporation edges, a single mass-registration
address can link about $37{,}000$ nodes, and a few hundred intermediaries each
incorporate more than a thousand entities. Most individual nominees, by contrast,
appear only once or twice, so the signal lives in the hub tail. This layer is
tractable for the same reason the owners are not, namely that it is densely
connected, and it is the constructive target our negative result points toward.

\section{Threats to Validity}\label{sec:threats}
We group the main threats and, where we can, bound their direction.

\textbf{Label incompleteness.} Beneficial-owner labels are positive-unlabelled,
so a paired negative might in truth be an unextracted owner. The rate is tiny.
With $1.18$ beneficial-owner edges per owner over roughly $60$K target entities, a
random paired negative is a latent positive with probability below $2\times
10^{-5}$. Such noise can only pull mid-range \textsc{auc} down, which leaves our
recovery numbers conservative, and it cannot touch the degree-$0$ floor, a set of
exact ties.

\textbf{Orbits via a proxy.} Theorem~\ref{thm:floor} is exact for automorphism
orbits under $\mathrm{Stab}(e)$. Empirically we approximate these by global WL
colour classes, which are coarser and do not fix $e$. The approximation can only
overstate distinguishability, so the observed law, if anything, understates how
tightly recovery is bound to class size. The WL-matched-decoy probe of
Section~\ref{sec:results} gives direct evidence the proxy is tight enough to bite:
matching decoys on colour lowers recovery at every degree and holds the isolated
floor exactly. Validating class sizes against exact automorphism orbits on sampled
subgraphs would quantify the remaining slack, and we leave it to future work. The
isolated case needs no proxy and is exact.

\textbf{Choice of proximity signal.} The reach probe uses Adamic--Adar with a
discounted two-hop term. The SEAL-style predictor in Section~\ref{sec:seal}
substitutes a richer high-order feature set and moves the mid-range by at most
about $0.02$, and the endpoints---the exact $0.5$ at degree $0$ and the WL
law---do not move at all. We therefore expect the qualitative picture to survive
other reasonable proximity measures.

\textbf{Model family.} Our GNN evidence uses a SIGN/SGC encoder and a
gradient-boosted decoder, plus a high-order heuristic stand-in for a subgraph GNN.
A fully trained DGCNN-SEAL or a $k$-WL model would strengthen the mid-degree
story. None can escape the degree-$0$ floor, which follows from the theorem for
any isomorphism-invariant rule.

\textbf{Pruning of isolated nodes.} ICIJ removes isolated nodes before release,
which truncates the low end of the degree distribution. This makes our count of
isolated owners a lower bound, so the true share of unrecoverable owners is at
least the $20.6\%$ we report.

\textbf{Leak assignment.} We assign each target to a leak by its entity's source.
A handful of cross-source resolution edges blur these boundaries. The per-leak
floor is robust to this, since it depends only on the isolated owners within each
leak, which no assignment choice moves off the floor.

\textbf{External generalisation.} The numbers are specific to ICIJ. The theorem
is not, and it applies to any setting where a target relation is concealed by
severing an endpoint's connectivity. How far the empirical magnitudes carry to
other corpora is an open question.

\section{Discussion and Broader Impact}\label{sec:disc}
\textbf{Attributes and positional encodings.} Folding richer attributes into the
type map only shrinks $\mathrm{Aut}(G)$ and its orbits, so attributes can break
ties. That is the internal-attribute channel we already measure, worth about seven
percent, and not a topological escape. Positional encodings need care. Random
features and learnable per-node embeddings are not isomorphism-invariant, and for
an isolated owner they are independent of the target, so they add variance without
adding recoverable signal, and Corollary~\ref{cor:iso} stands. A transductive
node-ID embedding makes this concrete. An isolated test owner receives an
embedding untethered to any entity, since it has no edges through which training
could tie the two, so under the faithful evaluation it scores at chance. Any
apparent gain from such a model traces back to Traps~4 and~5 rather than to
recovery, which a same-population, vary-the-owner protocol removes. Laplacian
positional encodings of an isolated node are degenerate. Used in a
sign-invariant way they respect $\mathrm{Aut}(G)$ and the orbit argument goes
through, and used otherwise they become one more attribute channel under the same
coverage limits.

\textbf{Beyond 1-WL.} More expressive models, whether $2$-WL or subgraph GNNs
such as SEAL~\cite{seal} and BUDDY~\cite{buddy}, refine the classes and can raise
recovery on mid-degree strata where paths exist. At $d_{\mathrm{obs}}=0$ they
cannot help, since with no incident edge there is no path to carry information,
and Corollary~\ref{cor:iso} applies at any expressivity. Our SEAL-style predictor
bears this out in Figure~\ref{fig:modelindep}: exactly $0.500$ at degree $0$, and
only a slight mid-range gain. That predictor is a boosted model over
enclosing-subgraph heuristics rather than a trained subgraph GNN. We use it
because a full DGCNN-SEAL or BUDDY run is heavy at this scale, and scalable
variants such as S3GRL~\cite{s3grl} and Bloom-signature features~\cite{bloomsig}
are the right vehicles for it. A trained subgraph model is the clearest missing
baseline, and we expect it to lift the mid-range a little while landing on the
same degree-$0$ floor, which the theorem forces. Under a $k$-WL model the
distinctiveness law of Theorem~\ref{thm:floor}(ii) restates with $k$-WL classes,
and the sensitivity analysis, in which finer refinement strengthens the law, is
evidence the pattern persists.

\textbf{Policy.} The result is, in the end, a statement about the limits of
transparency-by-leak. The owners who matter most are the ones engineered to leave
no recoverable trace. Attribution at scale will come from mandatory public
beneficial-ownership registers, not from better inference over what has leaked. It
is also a caution against tools that claim to unmask hidden owners. As the five
traps show, such claims are easy to produce and, absent a faithful evaluation,
likely to be artefacts, with real consequences for the people they name.

\section{Conclusion}
For the isolated owners who dominate the data, recovering concealed ownership from
leak structure is a coin flip, and elsewhere it is governed by how distinctive the
owner is. A trained GNN and a high-order predictor both land on the same floor,
and the apparent exceptions turn out to be evaluation artefacts. We hope the bound
and the faithful-evaluation rule prove useful wherever link prediction meets
relations that were hidden on purpose.


\end{document}